\newtheorem{prop}{Proposition}
\newtheorem{proof}{proof}
\begin{document}

%
% paper title
% Titles are generally capitalized except for words such as a, an, and, as,
% at, but, by, for, in, nor, of, on, or, the, to and up, which are usually
% not capitalized unless they are the first or last word of the title.
% Linebreaks \\ can be used within to get better formatting as desired.
% Do not put math or special symbols in the title.
\title{Networked ISAC for Low-Altitude Economy: Transmit Beamforming and UAV Trajectory Design}

\author{
\IEEEauthorblockN{Gaoyuan~Cheng, Xianxin~Song, Zhonghao~Lyu, and  Jie~Xu%
\thanks{ J. Xu is the corresponding author. }
}

\IEEEauthorblockA{School of Science and Engineering (SSE) and Future Network of Intelligence Institute (FNii), \\ The Chinese University of Hong Kong, Shenzhen, Guangdong, China} \\
Email: \{gaoyuancheng, xianxinsong, zhonghaolyu\}@link.cuhk.edu.cn, xujie@cuhk.edu.cn% <-this % stops a space
}

%\author{Gaoyuan~Cheng
%\thanks{Part of this paper has been presented in the IEEE International Conference
%on Communications (ICC) 2023 \cite{Cheng2023Coordinated}.
%}
%\thanks{G. Cheng, Y. Fang, and J. Xu are with the School of Science and Engineering (SSE), Future Network of Intelligence Institute (FNii), The Chinese University of Hong Kong (Shenzhen), Shenzhen 518172, China (e-mail: gaoyuancheng@link.cuhk.edu.cn, fangyuan@cuhk.edu.cn, xujie@cuhk.edu.cn). J. Xu is the corresponding author.}
%\thanks{Y. Fang is with the Future Network of Intelligence Institute (FNii), The Chinese University of Hong Kong, Shenzhen, 518172, China (e-mail: fangyuan@cuhk.edu.cn).}
%\thanks{D. W. K. Ng is with the School of Electrical Engineering and Telecommunications, University of New South Wales, Sydney, NSW 2052,  Australia (e-mail: w.k.ng@unsw.edu.au).}
%}

% make the title area
\maketitle

% As a general rule, do not put math, special symbols or citations
% in the abstract

\begin{abstract}
This paper studies the exploitation of networked integrated sensing and communications (ISAC) to support low-altitude economy (LAE), in which a set of networked ground base stations (GBSs) transmit wireless signals to cooperatively communicate with multiple authorized unmanned aerial vehicles (UAVs) and concurrently use the echo signals to detect the invasion of unauthorized objects in interested airspace. Under this setup, we jointly design the cooperative transmit beamforming at multiple GBSs together with the trajectory control of authorized UAVs and their GBS associations, for enhancing the authorized UAVs' communication performance while ensuring the sensing requirements for airspace monitoring. In particular, our objective is to maximize the average sum rate of authorized UAVs over a particular flight period, subject to the minimum illumination power constraints for sensing over the interested airspace, the maximum transmit power constraints at individual GBSs, and the flight constraints at UAVs. This problem is non-convex and challenging to solve, due to the involvement of integer variables and the coupling of optimization variables. To solve this non-convex problem, we propose an efficient algorithm by using the techniques of alternating optimization (AO), successive convex approximation (SCA), and semi-definite relaxation (SDR). Numerical results show that the obtained transmit beamforming and UAV trajectory designs in the proposed algorithm efficiently balance the tradeoff between the sensing and communication performances, thus significantly outperforming various benchmarks.

\end{abstract}

%\vspace{-5ex}

% no keywords
%\begin{IEEEkeywords}
%Networked integrated sensing and communications (ISAC), coordinated transmit beamforming, trarget detection, semi-definite relaxation, likelihood ratio test.
%\end{IEEEkeywords}

\section{Introduction}
Low-altitude economy (LAE) corresponds to a comprehensive economic form consisting of various low-altitude flight activities of unmanned and manned aircraft such as unmanned aerial vehicles (UAVs) and electric vertical take-off and landing (eVTOL). LAE is expected to enable a series of low-altitude applications in transportation, environmental monitoring, agriculture, and entertainment to create great economic and social value, which has attracted explosively increasing research attentions recently. How to ensure seamless wireless communication connections with authorized aircraft and how to monitor the airspace to prevent the invasion of unauthorized objects are essential for the success of LAE \cite{Whitepaper4}.

As one of the key technologies for six-generation (6G) wireless networks, integrated sensing and communications (ISAC) has emerged as an efficient solution to support LAE \cite{Whitepaper4, Mu2023magazine}, in which
ground base stations (GBSs) can use the transmitted wireless signals to communicate with authorized aircraft as network-connected aerial users and also receive echo signals for sensing low-altitude airspace and monitoring the invasion of unauthorized objects. More specifically, unlike conventional mono-static sensing via isolated transceivers, GBSs in 6G networks can enable networked ISAC to provide ubiquitous sensing and communication over the three-dimensional (3D) space by exploiting their cooperation in both coordinated multi-point (CoMP) communication and distributed multiple-input multiple-output (MIMO) radar sensing \cite{zhang2021perceptive, Fan2023Sensing}. Networked ISAC offers several advantages for LAE. On the one hand, networked ISAC can implement the joint signal processing for cooperative transmission among GBSs, thus properly mitigating or even utilizing the multi-cell air-ground interference. On the other hand, distributed GBSs can exploit the spatial diversity of the target radar cross section (RCS) from different sensing directions to enhance the sensing and detection performance \cite{Fei2023magazine}. Therefore, it is becoming increasingly important to exploit networked ISAC to support real-time communication and tracking of authorized aircraft , and sense the airspace to monitor the invasion of unauthorized objects.

In the literature, there have been some prior works studying the interplay between ISAC and UAVs  \cite{Meng2023magazine, Meng2023Throughput, Lyu2022Joint,wang2020constrained, Wu2023uav}. For instance, the authors in \cite{Lyu2022Joint} employed a low-altitude UAV as an aerial ISAC platform to communicate with ground users and sense potential targets at interested areas, in which a novel transmit beamforming and UAV trajectory design algorithm was proposed to balance the trade-off between communication and sensing performances. In \cite{wang2020constrained}, the authors proposed an ISAC system with multiple UAVs acting as aerial base stations (BSs) and communicating with ground users while sensing a target cooperatively. The UAVs' transmit power and locations as well as user association were jointly optimized for maximizing the network utility. The authors in \cite{Wu2023uav} investigated a UAV-enabled ISAC system with a set of UAVs sending communication signals to one mobile ground user, in which the UAV trajectory and user-UAV association were jointly designed. However, the above prior works only considered using UAV-enabled ISAC platforms to support ISAC for on-ground users and targets. To the best of our knowledge, the research on using on-ground networked ISAC systems to support LAE with enhanced ISAC performances has not been well investigated yet. In particular, how to jointly design the transmit beamforming of GBSs and the trajectory design of authorized UAVs to optimize the communication performance while ensuring the requirements on monitoring targeted airspace is an unaddressed but challenging problem.

%Authors in \cite{Wu2023uav} investigate a UAV-enabled ISAC system, in which a set of UAVs send communication signals to one mobile user, and the echo signal is collected by one GBS and thus employ location estimation. An efficient algorithm is proposed to jointly optimized the real-time UAV trajectory and tracking for mobile user for improving ISAC performance.

This paper studies a typical scenario of networked ISAC to support LAE, in which a set of networked GBSs send combined information and sensing signals to  cooperatively communicate with multiple authorized UAVs and simultaneously sense an interested airspace to monitor the invasion of unauthorized objects. We aim to jointly optimize the cooperative transmit information and sensing beamforming of GBSs, the trajectory design of authorized UAVs, and their association with GBSs to maximize the average sum rate of authorized UAVs over a particular flight period, subject to the transmit power constraints at GBSs, the practical flight constraints at UAVs, the GBS-UAV association constraints, and the minimum illumination power constraints for sensing over interested airspace. However, the above problem is difficult to solve due to the involvement of integer variables and the coupling of optimization variables. To address this issue, we propose an efficient algorithm to find a high-quality solution by using the techniques of alternating optimization (AO),  semidefinite relaxation (SDR), and successive convex approximation (SCA). Finally, numerical results show that the proposed transmit beamforming and UAV trajectory designs efficiently enhance the ISAC performance and significantly outperform various benchmark schemes.

%\newpage

%{\textit{Notations:}} Vectors and matrices are denoted by boldface lowercase and uppercase letters, respectively. $\bf{I}$ denotes an identity matrix with appropriate dimension. $\mathbb E (\cdot)$ denotes the statistical expectation. For a scalar $a$, $\left| a \right|$ denotes its absolute value. For a vector $\bf v$, $\left\| {\bf{v}} \right\|$ denotes its Euclidean norm. For a matrix ${\bf M}$ of arbitrary dimension, ${\bf M}^T$ and ${\bf M}^H$ denote its transpose and conjugate transpose, respectively. ${\mathbb C}^{x \times y}$ denotes the space of $x \times y$ complex matrices. $\cal{N}({\bf x}, {\bf Y})$ and $\cal{CN}({\bf x}, {\bf Y})$ denote the real-valued Gaussian and the circularly symmetric complex Gaussian (CSCG) distributions with mean vector ${\bf x}$ and covariance matrix ${\bf{Y}}$, respectively, and ``$ \sim $" means ``distributed as". $Q(\cdot)$ denotes the Q-function.

\begin{figure}[ht]
\centering
    \includegraphics[width=5.5cm]{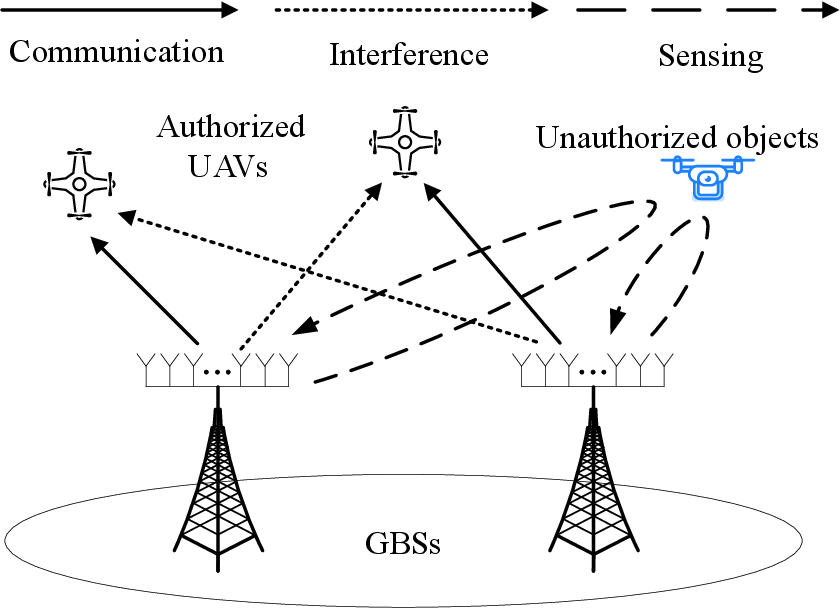}
\caption{Illustration of networked ISAC for communicating with authorized UAVs and monitor the invasion of unauthorized objects.}
\label{fig:system}
\end{figure}
%\vspace{-5ex}

\section{System Model and Problem Formulation}
We consider a networked ISAC system as illustrated in Fig. \ref{fig:system}, which consists of $M$ GBS each  with $N_a$ antennas and $K$ authorized UAVs each with a single antenna. The sets of GBSs and authorized UAVs are denoted as ${\cal M} = \{1,\ldots,M \}$ and ${\cal K} = \{1,\ldots,K \}$, respectively. In this system, the GBSs perform downlink communication with their  associated authorized UAVs and concurrently sense the targeted airspace for monitoring the unauthorized objects like UAVs.

We focus on the ISAC transmission over a particular time period with ${\cal T}=[0, T]$ with duration $T$, which is divided into $N$ time slots each with duration ${\Delta _t} = T/N$. Let ${\cal N}=\{1, \ldots, N\}$ denote the set of slots. Here, $N$ is chosen to be sufficiently large and accordingly ${\Delta _t}$ is sufficiently small, such that the UAVs' locations are assumed to be unchanged over each slot to facilitate the system design. Without loss of generality, we consider a 3D Cartesian coordinate system. Let ${\bf u}_m=(x_m,y_m)$ denote the horizontal coordinate of GBS $m \in {\cal M}$, and ${\bf q}_k[n]= ({\hat x}_k[n], {\hat y}_k[n])$ denote the time-varying horizontal coordinate of UAV $k \in {\cal K}$ at slot $n \in {\cal N}$. All GBSs are located at the zero altitude, and each UAV $k \in {\cal K}$ is at a fixed altitude of $H_k>0$, where $H_k$'s can be different among different UAVs due to their pre-assigned flight region.

At each slot $n$, each UAV is associated with one single GBS. We use a binary variable ${\alpha}_{m,k}[n] \in \{0,1\}$ to indicate the association relationship between GBS $m$ and UAV $k$ at slot $n$. Here, we have ${\alpha}_{m,k}[n]=1$ if UAV $k$ is associated with GBS $m$ at slot $n$ and ${\alpha}_{m,k}[n]=0$ otherwise. As such, we have $\sum\nolimits_{l \in {\cal M}} {{{\alpha}_{l,k}}[n]}  = 1, \forall k \in {\cal K}, n \in {\cal N}$. The GBSs send combined information and dedicated sensing signals to achieve full degrees-of-freedom (DoFs) for sensing. Let $s_{m,k}[n]$ denote the transmit information signal sent by GBS $m$ to UAV $k$ at time slot $n$, and ${\bf w}_{m,k}[n]$ denote the corresponding transmit beamforming vector. Here, we assume $s_{m,k}[n]$'s are independent and identically distributed (IID) random variables each with zero mean and unit variance. Let ${\bf s}_{m}[n] \in {\mathbb C}^{N_a \times 1}$ denote the dedicated sensing signal vector sent by GBS $m$ at time slot $n$ with covariance matrix ${{\bf{R}}_m}[n]= {\mathbb E}({{\bf{s}}_m}[n]{\bf{s}}_m^H[n]) \succeq {\bf 0}$, where ${\mathbb E}$ denotes the statistic expectation and the superscript $H$ denotes the conjugate transpose. As a result, the transmitted signal by GBS $m$ at time slot $n$ is
\begin{align}
{{\bf{x}}_m}[n] = \sum\nolimits_{i \in {\cal K}} {{\bf{w}}_{m,i}}[n]{s_{m,i}}[n]  + {{\bf{s}}_m}[n], n \! \in \!{\cal N},
\end{align}
and the statistical covariance matrix of ${\bf x}_m[n]$ is
\begin{align}
{{\bf{X}}_m}[n] =\sum\nolimits_{i \in {\cal K}}  {\bf{w}}_{m,i}[n]{{\bf{w}}_{m,i}^H}[n] + {{\bf{R}}_m[n]}.
\end{align}
%\vspace{-3ex}

\subsection{Communication Model}

First, we consider the communication from GBSs to UAVs. Let ${{\bf{h}}_{m,k}}[n] = \sqrt {{\beta _{m,k}}[n]} {{\bf{g}}_{m,k}}[n]$ denote the channel vector between GBS $m$ and UAV $k$, where ${\beta _{m,k}}[n] = \kappa ({ {\left\| {{{\bf{q}}_k}[n] - {{\bf{u}}_m}} \right\|}^2+ {H_k^2}}) ^{ - 1}$, and $\kappa$ denotes the path loss at the reference distance of one meter. In this model, we assume the air-ground links from GBSs to UAVs are dominated by light-of-sight (LoS) channels. By assuming that uniform linear arrays (ULAs) are deployed at GBSs, we have
\begin{align}
{{{\bf{g}}}_{m,k}}[n] &= [1,{e^{j2\pi \frac{d}{\lambda }\cos \theta ({{\bf{q}}_k}[n],{{\bf{u}}_m})}}, \ldots ,\nonumber \\
&{e^{j2\pi \frac{d}{\lambda }({N_a} - 1)\cos \theta ({{\bf{q}}_k}[n],{{\bf{u}}_m})}}]^T, \label{steer:1}
\end{align}
where the superscript $T$ denotes the transpose,  $j=\sqrt{-1}$, $\lambda$ denotes the carrier wavelength, $d$ denotes the antenna spacing, and  ${\theta ({{\bf{q}}_k}[n],{{\bf{u}}_m})}$ denotes the angle of departure (AoD) between GBS $m$ and UAV $k$ that is given by
\begin{align}
\theta ({{\bf{q}}_k}[n],{{\bf{u}}_m}) = \arccos \frac{H_k}{({{\left\| {{{\bf{q}}_k}[n] - {{\bf{u}}_m}} \right\|}^2 + H_k^2})^{\frac{1}{2}} }.
\end{align}
As a result, if UAV $k$ is associated with GBS $m$ at slot $n$ or equivalently ${\alpha _{m,k}}[n] = 1$, then the corresponding received signal-to-interference-plus-noise (SINR) at UAV $k$ is is denoted by $\gamma_{m,k}$ in \eqref{SINR} at the top of the next page,
\begin{figure*}
\begin{align} \label{SINR}
\gamma _{m,k}[n] = \frac{ {\left\| {{\bf{h}}_{m,k}^H[n]{{\bf{w}}_{m,k}}[n]} \right\|}^2 }{{\sum\nolimits_{(l,i) \ne (m,k)} {{ || {{\bf{h}}_{m,k}^H[n]{{\bf{w}}_{l,i}}[n]} || }^2 }  + \sum\nolimits_{l \in {\cal M}} {{\bf{h}}_{m,k}^H[n]{{\bf{R}}_l}[n]{{\bf{h}}_{m,k}[n]}}  + \sigma^2}}.
\end{align}
%\hrulefill
%\vspace{-5ex}
\end{figure*}
where $\sigma ^2$ denotes the power of the additive withe Gaussian noise (AWGN) at the UAV receiver. It is observed from \eqref{SINR} that each UAV suffers from interference not only from other communication signals but also the dedicated sensing signals. Let ${r_{m,k}}[n] = {\log _2}(1 + {\gamma _{m,k}}[n])$ denote the achievable rate if UAV $k$ is associated to GBS $m$ in slot $n$. Then the achievable sum rate of the $K$ authorized UAVs in slot $n$ is
\begin{align}
r[n]= \sum\nolimits_{m \in {\cal M}}\sum\nolimits_{k \in {\cal K}} {\alpha _{m,k}}[n]{r_{m,k}}[n] .
\end{align}
%\vspace{-3ex}

\subsection{Sensing Model}

Next, we consider radar sensing towards the targeted airspace. For facilitating the ISAC design, we sample $Q$ locations as representative sensing points at the corresponding airspace, each of which has an altitude of $H$ and a horizontal location of ${\bf v}_q$, $q\in {\cal Q} \buildrel \Delta \over = \{ 1, \ldots ,Q\}$. As such, the distance between GBS $m$ and sensing location $q$ is ${d_{l,q}} = \sqrt{  {{{\left\| {{{\bf{u}}_l} - {{\bf{v}}_q}} \right\|}^2} + {H^2}}} $. Let $\theta_{m,q} = \arccos \frac{H}{{d_{l,q}}}$ denote the corresponding angle between GBS $m$ and sensing location $q$. We have the corresponding steering vector towards the $q$-th sensing location as
\begin{align}
{\bf{a}}({\theta _{m,q}}) \! =\! [1,{e^{j2\pi \frac{d}{\lambda }\cos ({\theta _{m,q}})}}, \ldots ,{e^{j2\pi \frac{d}{\lambda }({N_a} - 1)\cos ({\theta _{m,q}})}}]^T.
\end{align}
We use the illumination (or received) signal power at the interested sensing locations as the sensing performance metric. For target sensing location $q$, the illumination power or the received power from the $M$ GBSs at slot $n$ is
\begin{align}
{\zeta _{q}}[n] &= \sum\nolimits_{l \in {\cal M}}   {{\bf{a}}^H}(\theta _{l,q})(\sum\nolimits_{i \in {\cal K}} {{\bf{w}}_{l,i}^H[n]{{\bf{w}}_{l,i}}[n]} \nonumber \\
& + {{\bf{R}}_l}[n]){{\bf{a}}^H}(\theta _{l,q})/{d_{l,q}}^2.
\end{align}
%\vspace{-3ex}

\subsection{Problem Formulation}
We aim to maximize the sum rate at all the $K$ authorized UAVs over the $N$ time slots, by jointly optimizing the cooperative transmit beamforming $\{ {\bf{w}}_{l,i}[n]\}$ and $\{ {\bf{R}}_{l}[n]\}$ at GBSs, the trajectory design $\{{\bf{q}}_k[n]\}$ of authorized UAVs, and the UAV-GBS association $\{\alpha _{m,k}[n]\}$, subject to the minimum illumination power constraints towards the targeted airspace, and the maximum transmit power constraints at GBSs. In particular, we assume the initial and final locations of UAVs are fixed to be ${\bf{q}}_k[1] = {\bf{q}}_k^{{\rm{I}}}$ and ${\bf{q}}_k[N] = {\bf{q}}_k^{{\rm{F}}}, \forall k \in {\cal K}$, respectively. Also, the UAVs' flights are subject to their individual maximum speed constraints and the collision avoidance constraints, i.e.,
\begin{align}
\left\| {{{\bf{q}}_k}[n + 1] - {{\bf{q}}_k}[n]} \right\| \le {V_{\max }}{\Delta _t}, \forall k \in {\cal K}, n \in {\cal N},
\end{align}
\begin{align}
&\left\| {{{\bf{q}}_k}[n] - {{\bf{q}}_i}[n]} \right\|^2 + (H_k -H_i)^2 \ge {D_{\min }^2}, \nonumber \\
&\forall k ,i \in {\cal K}, k \ne i,n \in {\cal N},
\end{align}
where $V_{\max}$ denotes the maximum UAV speed, and ${D_{\min }}$ denotes the minimum distance between any two UAVs for collision avoidance. Furthermore, to ensure the sensing requirements, we suppose that the illumination power towards each targeted sensing location $q$ should be no less than a given threshold $\Gamma$. As such, the joint cooperative transmit beamforming and UAV trajectory optimization problem for the LAE-oriented networked ISAC system is formulated as
\begin{subequations}
\begin{align}
\left( {\rm{P1}} \right):&\mathop {\max }\limits_{\{ {{\bf{w}}_{m,k}[n]},{{\bf{R}}_m[n]},{\bf q}_k[n], {\alpha _{m,k}}[n]\} }~ \sum\nolimits_{n \in {\cal N}} {{r}[n]} \label{p1:obj} \\
{\rm{s.t.}}&~  {{\zeta _{q}}[n]}  \ge \Gamma  , \forall q, n, \label{p1:bp} \\
&~\sum\nolimits_{i \in {\cal K}} {\left\| {{\bf{w}}_{m,i}}[n] \right\|}^2 \! +\! {\rm{tr}}\left( {{\bf{R}}_m}[n] \right) \le {P_{\max }}, \forall m, \label{p1:pow} \\
&~{\bf{q}}_k[1] = {\bf{q}}_k^{{\rm{I}}}, \forall k,\label{p1:ini} \\
&~{\bf{q}}_k[N] = {\bf{q}}_k^{{\rm{F}}}, \forall k,\label{p1:end} \\
&~\left\| {{{\bf{q}}_k}[n + 1] - {{\bf{q}}_k}[n]} \right\|^2 \le ({V_{\max }}{\Delta _t})^2, \forall k, n, \label{p1:speed} \\
&~\left\| {{{\bf{q}}_k}[n] - {{\bf{q}}_i}[n]} \right\|^2 + (H_k -H_i)^2 \ge {D_{\min }^2}, \nonumber \\
&~\forall k,i,n, k \ne i, \label{p1:crash} \\
&~{\alpha}_{m,k}[n] \in \left\{ {0,1} \right\},  \forall m, k , n, \label{p1:asso1} \\
&~\sum\nolimits_{l \in {\cal M}} {{{\alpha}_{l,k}}}[n]= 1, \forall k, n. \label{p1:asso2}
\end{align}
\end{subequations}

However, problem (P1) is very difficult to solve in general. This is due to the fact that the sum rate objective function in \eqref{p1:obj} is highly non-concave, the collision avoidance constraints in \eqref{p1:crash}, the binary UAV association constraints in \eqref{p1:asso1} are non-convex, and the optimization variables $\{ {\bf{w}}_{l,i}[n]\}$, $\{ {\bf{R}}_{l}[n]\}$, $\{{\bf{q}}_k[n]\}$, and $\{\alpha _{m,k}[n]\}$ are coupled. Therefore, problem (P1) is a mixed-integer non-convex problem that is challenging to solve.
%\vspace{-5ex}

\section{Proposed Solution to Problem (P1)} \label{Sec:solution}
In this section, we propose an efficient algorithm to solve problem (P1) by using AO. In particular, we alternately optimize the UAV association $\{ \alpha_{m,k}[n]\}$, transmit beamforming $\{{{\bf{w}}_{m,k}}[n]\}$, $\{{{\bf{R}}_{m}}[n]\}$ at GBSs, and UAV trajectory $\{{{\bf{q}}_k}[n]\}$ via using the techniques of SCA and SDR.

To facilitate the derivation, we define ${\bf{H}}_{m,k}[n] = {\bf{h}}_{m,k}[n]{\bf{h}}_{m,k}^H[n]$ and ${{\bf{W}}_{m,k}}[n] = {{\bf{w}}_{m,k}}[n]{\bf{w}}_{m,k}^H[n]$ with ${\bf W}_{m,k}[n] \succeq 0$ and ${\text {rank}}({\bf W}_{m,k}[n]) \le 1$, $\forall m,k,n$ . Accordingly, $ r_{m,k}[n]$ is reformulated in \eqref{sumrate:SDR} at the top of the next page and problem (P1) is equivalently reformulated as
\begin{figure*}
\begin{align} \label{sumrate:SDR}
{r_{m,k}}[n] = {\log _2}\left(1+ {\frac{{{{\rm{tr}}\left( {\bf{H}}_{m,k}[n]{\bf{W}}_{m,k}[n] \right)} }}{{\sum\nolimits_{(l,i) \ne (m,k)} { {{\rm{tr}}\left( {\bf{H}}_{m,k}[n]{\bf{W}}_{l,i}[n] \right)} }  + \sum\nolimits_{l \in {\cal M}} {\rm{tr}}({\bf{H}}_{m,k}[n]{\bf{R}}_l[n] ) + \sigma _c^2}}} \right).
\end{align}
%\hrulefill
\vspace{-3ex}
\end{figure*}
\begin{subequations}
\begin{align}
&\left( {\rm{P2}} \right): \max\limits_{ \{ \scriptstyle {\bf{W}}_{m,k}[n]\succeq 0,{{\bf{R}}_m}[n]\succeq 0, \hfill\atop
\scriptstyle {\bf q}_k[n], \alpha _{m,k}[n] \} \hfill}  \sum\nolimits_{n \in {\cal N}} {{r}}[n]  \\
{\rm{s.t.}}&\sum\nolimits_{l \in {\cal M}} {{\bf{a}}^H}(\theta _{l,q})(\sum\nolimits_{i \in {\cal K}} {{{\bf{W}}_{l,i}}[n]}+ {{\bf{R}}_l}[n])\nonumber \\
&\cdot {{\bf{a}}^H}(\theta _{l,q})/{d_{l,q}}^2   \ge  \Gamma  ,\forall q ,n , \label{P2:con:sen} \\
&\sum\nolimits_{i \in {\cal K}} {\rm{tr}}\left( {{\bf{W}}_{m,i}}[n] \right) \! +\! {\rm{tr}}\left( {{\bf{R}}_m}[n] \right) \le {P_{\max }},\forall m, n, \label{P2:con:pow} \\
&{\rm{rank}}\left( {\bf{W}}_{m,k}[n] \right) \le 1, \forall m \in {\cal {M}}, k \in {\cal K} \label{P2:con:rank1} , \\
&\eqref{p1:ini}, ~\eqref{p1:end},~\eqref{p1:speed}, ~\eqref{p1:crash}, ~\eqref{p1:asso1},~{\text{and}}~\eqref{p1:asso2}. \nonumber
\end{align}
\end{subequations}

In the following, we present the optimization of $\{\alpha _{m,k}[n]\}$, $\{{\bf{W}}_{m,k}[n], {{\bf{R}}_m}[n]\}$, and $\{{\bf q}_k[n]\}$, respectively, by assuming the others are given.

%\vspace{-4ex}
\subsection{UAV-GBS Association Optimization}
In this subsection, we optimize the UAV-GBS association $\{{\alpha _{m,k}}[n]\}$ under given trajectory $\{{\bf q}_k[n]\}$ and transmit beamforming $\{{\bf{W}}_{m,k}[n]\}$ and $\{{\bf{R}}_m[n]\}$. As such, the UAV-GBS association optimization problem is
\begin{subequations}
\begin{align}
\left( {\rm{P3}} \right): &\mathop {\max }\limits_{\{ {\alpha _{m,k}}[n]\} }  ~~ \sum\nolimits_{n \in {\cal N}} {r}[n]  \\
~{\rm{s.t.}}&~ ~\eqref{p1:asso1} ~{\text{and}}~\eqref{p1:asso2}. \nonumber
\end{align}
\end{subequations}
Though problem (P3) is an integer problem, it can be optimally solved by exploiting its special structure. In particular, for any UAV $k$ at slot $n$, it follows that the optimal solution of $\{\alpha_{m,k}[n]\}$ is obtained by ${\alpha^ * _{{m^ * },k}}[n] = 1$ and ${\alpha^ * _{m,k}}[n] = 0,\forall m \ne {m^ * }$, where ${m^ * } = \arg \mathop {\max }\limits_{\{ m\} } {r_{m,k}}[n]$. Notice that if there are multiple GBSs achieving the same maximum data rate, then we can choose any arbitrary one as $m^*$ without loss of optimality.

\subsection{Transmit Beamforming Optimization}
Next, we optimize the transmit beamforming $\{{\bf{W}}_{m,k}[n]\}$ and $\{{\bf{R}}_m[n]\}$ under given UAV trajectory $\{{\bf q}_k[n]\}$ and UAV association $\{{\alpha _{m,k}}[n]\}$, for which the optimization problem is
\begin{subequations}
\begin{align}
\left( {\rm{P4}} \right): &\mathop {\max }\limits_{\{ {\bf{W}}_{m,k}[n],{{\bf{R}}_m}[n]\} }  ~~  \sum\nolimits_{n \in {\cal N}} { r}[n]  \\
~{\rm{s.t.}}&~\eqref{P2:con:sen},~\eqref{P2:con:pow}, ~{\text{and}}~\eqref{P2:con:rank1}. \nonumber
\end{align}
\end{subequations}
Note that problem (P4) is non-convex due to the non-concavity of ${r}[n]$. To address this issue, we adopt the SCA technique to approximate the non-convex objective function as a series of concave ones. Consider each SCA iteration $o\ge 1$, in which the local point is denoted by $\{{\bf{W}}_{l,i}^{(o)}[n]\}$ and $\{{\bf{R}}_l^{(o)}[n]\}$. In this case, we approximate $r_{m,k}[n]$ in the objective function as its lower bound, i.e.,
\begin{align} \label{rate:SCA}
&{r_{m,k}}[n] \ge {\log _2}( \sum\nolimits_{l \in {\cal M}} \sum\nolimits_{i \in {\cal K}} {\rm{tr}}\left( {\bf{H}}_{m,k}[n]{\bf{W}}_{l,i}[n] \right) \nonumber\\
&+ \sum\nolimits_{l \in {\cal M}} {\rm{tr}}({\bf{H}}_{m,k}[n]{\bf{R}}_l[n])  + \sigma ^2 )\nonumber \\
&- a_{m,k}^{(o)}[n] - \sum\limits_{(l,i) \ne (m,k)} {\rm{tr}}({\bf{B}}_{m,k}^{(o)}[n]({\bf{W}}_{l,i}[n] - {\bf{W}}_{l,i}^{(o)}[n])) \nonumber \\
&- \sum\nolimits_{l\in {\cal M}} {\rm{tr}}({\bf{B}}_{m,k}^{(o)}[n]({\bf{R}}_l[n] - {\bf{R}}_l^{(o)}[n])) \triangleq {\bar r}_{m,k}^{(o)}[n],
\end{align}
where ${\bf{B}}_{m,k}^{(o)}[n]$ is defined in \eqref{deri:SCA} and
\begin{figure*}
\begin{align} \label{deri:SCA}
{\bf{B}}_{m,k}^{(o)}[n] = \frac{{\log }_2e{\bf{H}}_{m,k}[n]}{{\sum\nolimits_{(l,i) \ne (m,k)}  {{\rm{tr}}\left( {\bf{H}}_{m,k}[n]{\bf{W}}_{l,i}^{(o)}[n] \right)}   + \sum\nolimits_{l \in {\cal M}} {\rm{tr}}\left( {\bf{H}}_{m,k}[n]{\bf{R}}_l^{(o)}[n] \right)  + \sigma ^2}}
\end{align}
%\hrulefill
\vspace{-5ex}
\end{figure*}
\begin{align}\label{deri:SCAa}
&a_{m,k}^{(o)}[n]= {\log _2}(\sum\nolimits_{(l,i) \ne (m,k)}  {\rm{tr}}({\bf{H}}_{m,k}[n]{\bf{W}}_{l,i}^{(o)}[n]) \nonumber \\
& + \sum\nolimits_{l \in {\cal M}} {\rm{tr}}({\bf{H}}_{m,k}[n]{\bf{R}}_l^{(o)}[n]) + \sigma ^2).
\end{align}
By denoting ${\bar r}^{(o)}[n] = \sum\nolimits_{m \in {\cal M}} \sum\nolimits_{k \in {\cal K}} {\bar r_{m,k}^{(o)}} $ and replacing the objective function in (P4) as $\sum\nolimits_{n \in {\cal N}} {\bar r}^{(o)}[n]$, problem (P4) is approximated as (P5.$o$) in the $o$-th SCA iteration.

However, the rank-one constraints \eqref{P2:con:rank1} in problem (P5.$o$) make it still non-convex. To address this issue, we omit the rank-one constraints and obtain the SDR form of (P5.$o$) as (SDR5.$o$), which is a convex problem and can be optimally solved by standard convex optimization solvers such as CVX \cite{grant2014cvx}. Let $ \{ {\bf{W}}_{m,k}^{ * }[n] \}$ and $\{ {{\bf{R}}_m^ * }[n] \} $ denote the optimal solution to (SDR5.$o$). Notice that the obtained solutions of $\{ {{\bf{W}}_{m,k}^ * }[n] \}$ are generally with high ranks and thus may not satisfy the rank-one constraints in (P5.$o$). We have the following proposition to reconstruct equivalent rank-one solutions to problem (P5.$o$).

\begin{prop}
If the optimal solutions of $ \{ {\bf{W}}_{l,i}^ {*}[n] \} $ to problem (SDR5.$o$) are not rank-one, then we reconstruct the equivalent solutions to  problem (P5.$o$) as $\{ \{ {{\bf{\bar W}}}_{l,i}[n]\} ,\{ {\bf{\bar R}}_l[n]\} \}$ as
\begin{subequations}
\begin{align}
&{\bf{\bar w}}_{l,i}[n] \!= \!( {{\bf{h}}_{m,k}^H[n]{\bf{W}}_{l,i}^{*}[n]{{\bf{h}}_{m,k}[n]}} )^{ - \! \frac{1}{2}}{\bf{W}}_{l,i}^{*}[n]{{\bf{h}}_{m,k}[n]}, \label{app:1:1}  \\
&{{{\bf{\bar W}}}_{l,i}[n]} = {{\bf{\bar w}}_{l,i}[n]}{\bf{\bar w}}_{l,i}^H[n], \label{app:1:2} \\
&{\bf{\bar R}}_l[n] = \sum\nolimits_{i \in {\cal K}} {{\bf{W}}_{l,i}^{*}}[n]  + {\bf{R}}_l^{*}[n] - \sum\nolimits_{i \in {\cal K}} {\bf{\bar W}}_{l,i}[n], \label{app:1:3}
\end{align}
\end{subequations}
which satisfy the rank-one constraints and are feasible for problem (P5.$o$). The equivalent solutions in (19) achieve the same value for (P5.$o$) as the optimal value achieved by $ \{ {\bf{W}}_{m,k}^{ * }[n] \}$ and $\{ {{\bf{R}}_m^ * }[n] \} $ for problem (P5.o). Therefore, $\{{\bf{\bar w}}_{l,i}[n] \}$ and $\{{\bf{\bar R}}_l[n]\}$ are optimal for (P5.$o$) and the SDR (SDR5.$o$) of (P5.$o$) is tight.
\end{prop}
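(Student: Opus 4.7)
The plan is to verify four things about the reconstruction \eqref{app:1:1}--\eqref{app:1:3}: (i) each $\bar{\mathbf{W}}_{l,i}[n]$ is rank-one; (ii) each $\bar{\mathbf{R}}_l[n]$ is positive semidefinite; (iii) the per-GBS aggregated covariance $\sum_{i\in\mathcal{K}}\mathbf{W}_{l,i}[n]+\mathbf{R}_l[n]$ is preserved for every $l$ and $n$; and (iv) every SCA surrogate $\bar r_{m,k}^{(o)}[n]$ evaluates to the same value. Once these are in hand, (i) delivers rank-one feasibility; (iii) implies that the power constraint \eqref{P2:con:pow} and the sensing constraint \eqref{P2:con:sen} — both of which depend on $\{\mathbf{W}_{l,i}[n],\mathbf{R}_l[n]\}$ only through the aggregated covariance — carry over unchanged from the SDR optimum to the reconstruction; and (iv) ensures the objective is unchanged, so that feasibility plus objective equality forces optimality for (P5.$o$) and hence tightness of (SDR5.$o$). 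Step (i) is immediate from \eqref{app:1:2}, and step (iii) follows directly by summing \eqref{app:1:3} against $\sum_{i\in\mathcal{K}}\bar{\mathbf{W}}_{l,i}[n]$.

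The main obstacle is step (ii). Substituting \eqref{app:1:1} into \eqref{app:1:2} writes $\bar{\mathbf{W}}_{l,i}[n]=\mathbf{W}_{l,i}^{*}[n]\mathbf{h}_{m,k}[n]\mathbf{h}_{m,k}^H[n]\mathbf{W}_{l,i}^{*}[n]\big/\big(\mathbf{h}_{m,k}^H[n]\mathbf{W}_{l,i}^{*}[n]\mathbf{h}_{m,k}[n]\big)$, and I would then establish the PSD inequality $\mathbf{W}_{l,i}^{*}[n]\succeq \bar{\mathbf{W}}_{l,i}[n]$ by a matrix Cauchy--Schwarz argument. Since $\mathbf{W}_{l,i}^{*}[n]\succeq\mathbf{0}$, factor it as $\mathbf{U}\mathbf{U}^H$; then for every vector $\mathbf{x}$, $|\mathbf{x}^H\mathbf{W}_{l,i}^{*}[n]\mathbf{h}_{m,k}[n]|^2=|(\mathbf{U}^H\mathbf{x})^H(\mathbf{U}^H\mathbf{h}_{m,k}[n])|^2\le\big(\mathbf{x}^H\mathbf{W}_{l,i}^{*}[n]\mathbf{x}\big)\big(\mathbf{h}_{m,k}^H[n]\mathbf{W}_{l,i}^{*}[n]\mathbf{h}_{m,k}[n]\big)$. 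Dividing by the positive scalar $\mathbf{h}_{m,k}^H[n]\mathbf{W}_{l,i}^{*}[n]\mathbf{h}_{m,k}[n]$ gives $\mathbf{x}^H\bar{\mathbf{W}}_{l,i}[n]\mathbf{x}\le\mathbf{x}^H\mathbf{W}_{l,i}^{*}[n]\mathbf{x}$. Summing this PSD inequality over $i\in\mathcal{K}$ and adding $\mathbf{R}_l^{*}[n]\succeq\mathbf{0}$ then yields $\bar{\mathbf{R}}_l[n]\succeq\mathbf{0}$, as required.

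For step (iv), the normalization built into \eqref{app:1:1} is precisely what gives $\mathbf{h}_{m,k}^H[n]\bar{\mathbf{W}}_{m,k}[n]\mathbf{h}_{m,k}[n]=\mathbf{h}_{m,k}^H[n]\mathbf{W}_{m,k}^{*}[n]\mathbf{h}_{m,k}[n]$ for every served pair $(m,k)$. Inspecting \eqref{rate:SCA}, the log argument depends on $\{\mathbf{W}_{l,i}[n],\mathbf{R}_l[n]\}$ only through the aggregated covariance $\sum_{l}\big(\sum_i\mathbf{W}_{l,i}[n]+\mathbf{R}_l[n]\big)$, which is preserved by (iii); and each linear correction can be rewritten as a trace against that same aggregated covariance minus a term proportional to $\mathbf{h}_{m,k}^H[n]\mathbf{W}_{m,k}[n]\mathbf{h}_{m,k}[n]$, both of which are preserved. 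Hence $\bar r_{m,k}^{(o)}[n]$, and therefore the surrogate objective $\sum_n\bar r^{(o)}[n]$, takes the same numerical value at the reconstruction as at the SDR optimum, completing the verification.
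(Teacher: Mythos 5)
Your proof is correct and is essentially the standard argument that the paper defers to by citation: the paper's own ``proof'' is a one-line reference to \cite[Proposition 1]{Cheng2024Optimal}, and the argument given there is exactly your chain --- rank-one by construction, ${\bf W}_{l,i}^{*}[n]\succeq \bar{\bf W}_{l,i}[n]$ via Cauchy--Schwarz to get $\bar{\bf R}_l[n]\succeq{\bf 0}$, preservation of the per-GBS aggregate covariance to carry over the power and sensing constraints, and preservation of both the desired-signal power ${\bf h}_{m,k}^H[n]{\bf W}_{m,k}[n]{\bf h}_{m,k}[n]$ and the aggregate traces to keep the surrogate objective unchanged. The only point you leave implicit is that ${\bf h}_{m,k}^H[n]{\bf W}_{m,k}^{*}[n]{\bf h}_{m,k}[n]>0$ so the normalization in \eqref{app:1:1} is well defined, which holds at any optimum with nonzero rate and is likewise assumed without comment in the source.
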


\begin{proof}
This proposition follows directly from \cite[Proposition 1]{Cheng2024Optimal}, for which the proof is omitted here.
\end{proof}

Therefore, in each SCA iteration $o$ we obtain the optimal solution to (P5.$o$), which can be shown to lead to a monotonically non-decreasing objective value of (P4). Hence, the SCA-based solution to problem (P4) is ensured to be converge.
%\vspace{-3ex}
\subsection{UAV Trajectory Optimization}
Next, we optimize the UAV trajectory $\{{\bf q}_k[n]\}$ with given transmit beamforming $\{{\bf{W}}_{l,i}[n]\}$, $\{{{\bf{R}}_l}[n]\}$ and UAV-GBS association $\{{ \alpha}_{l,i}[n]\}$. The corresponding optimization problem is
\begin{subequations}
\begin{align}
\left( {\rm{P6}} \right): &\mathop {\max }\limits_{\{ {\bf q}_k[n]\} }  ~~ \sum\nolimits_{n \in {\cal N}} {{ r}[n]}   \label{P6:obj} \\
~{\rm{s.t.}}&~\eqref{p1:ini}, ~\eqref{p1:end}, ~\eqref{p1:speed}, ~{\text{and}}~\eqref{p1:crash}. \nonumber
\end{align}
\end{subequations}
Note that in (P6), \eqref{p1:crash} are non-concave constraints and the optimization variables $\{{\bf q}_k[n]\}$ appear in the steering vector ${\bf g}_{m,k}[n]$, both of which are difficult to be tackled. First, we adopt SCA to approximate \eqref{p1:crash}. In each iteration $o$, the first-order Tyler expansion is applied to approximate the left-hand-side of \eqref{p1:crash} with respect to ${\bf{q}}_k^{(o)}[n]$ and ${\bf{q}}_i^{(o)}[n]$, based on which the approximate version of constraint \eqref{p1:crash} is expressed as
\begin{align}
&  2( {\bf{q}}_k^{(o)}[n] - {\bf{q}}_i^{(o)}[n] )^T \left( {\bf{q}}_k[n] - {\bf{q}}_i[n] \right) \nonumber \\
&- || {\bf{q}}_k^{(o)}[n] - {\bf{q}}_i^{(o)}[n] || ^2\ge D_{\min }^2 - (H_k-H_i)^2. \label{SCA:crash}
\end{align}

Next, we reformulate the objective function $r[n]$. To facilitate the reformulation, we denote the entries in the $p$-th and $q$-th column of ${\bf{W}}_{l,i}[n]$ and ${\bf{R}}_{l}[n]$ as $[ {\bf{W}}_{l,i}[n] ]_{p,q}$ and $\left[ {\bf{R}}_{l}[n] \right]_{p,q}$, respectively. Similarly, their absolute values are denoted by $| {\left[ {\bf{W}}_{l,i}[n] \right]}_{p,q} |$ and $| {\left[ {\bf{R}}_{l}[n] \right]}_{p,q} |$. Moreover, the phases of these entries are denoted by $\theta _{p,q}^{{\bf{W}}_{l,i}[n]}$ and $\theta _{p,q}^{{\bf{R}}_{l}[n]}$, respectively. Hence, we equivalently re-express $r_{m,k}[n]$ as
\begin{align}
&{r_{m,k}}[n] ={\log _2}(\sum\nolimits_{l \in {\cal M}} \sum\nolimits_{i \in {\cal K}} {\eta \left( {\bf{W}}_{l,i}[n],{\bf{q}}_k[n] \right)} \nonumber \\
& + \sum\nolimits_{l \in {\cal M}} \mu ({\bf{R}}_l[n],{\bf{q}}_k[n])  + \frac{{{\sigma ^2}}}{\kappa }({\left\| {{{\bf{q}}_k}[n] - {{\bf{u}}_m}} \right\|^2} + {H_k^2})) \nonumber \\
&  - {\log _2}(\sum\nolimits_{(l,i) \ne (m,k)} \eta ( {\bf{W}}_{l,i}[n],{\bf{q}}_k[n] )   \nonumber \\
& + \sum\limits_{l \in {\cal M}} \mu ({\bf{R}}_l[n],{\bf{q}}_k[n]) + \frac{{{\sigma ^2}}}{\kappa }({\left\| {{{\bf{q}}_k}[n] - {{\bf{u}}_m}} \right\|^2} + {H_k^2})), \label{traj:rate}
\end{align}
where
\begin{align}
&\eta ({\bf{W}}_{l,i}[n],{\bf{q}}_k[n])\!\! = \!\! \sum\limits_{r = 1}^{{N_a}}\! {[ {\bf{W}}_{l,i}[n] ]}_{r,r}\! \! +\! \! 2\sum\limits_{p = 1}^{N_a} \!  \sum\limits_{q = p+1}^{{N_a}} \! | {\left[ {\bf{W}}_{l,i}[n] \right]}_{p,q} | \nonumber \\
&\!\times \!\cos ( \theta _{p,q}^{{\bf{W}}_{l,i}[n]}+ 2\pi \frac{d}{\lambda }(q - p)\frac{H_k}{( {{\left\| {{{\bf{q}}_k}[n] \! -\! {{\bf{u}}_m}} \right\|}^2 \!+\! {H_k^2}})^{\frac{1}{2}} } ) ,
\end{align}
and
\begin{align}
&\mu ({\bf{R}}_l[n],{\bf{q}}_k[n])= \sum\limits_{r = 1}^{{N_a}} {\left[ {\bf{R}}_l [n] \right]}_{r,r}  + 2\sum\limits_{p = 1}^{N_a} \sum\limits_{q = p+1}^{N_a} | {\left[ {\bf{R}}_l[n] \right]}_{p,q} | \nonumber \\
&\!\times \!\cos ( \theta _{p,q}^{{\bf{R}}_{l}[n]}\! +\! 2\pi \frac{d}{\lambda }(q - p)\frac{H_k}{( {{\left\| {{{\bf{q}}_k}[n] - {{\bf{u}}_m}} \right\|}^2 \!+\! {H_k^2}})^{\frac{1}{2}} } ).
\end{align}
%\vspace{-2ex}
In the following, we adopt first-order Tyler expansion to approximate the re-expressed \eqref{traj:rate} in iteration $o$ as \eqref{QoS:linear}
\begin{align}
{r_{m,k}}[n] \! \approx \! {{\tilde r}^{(o)}_{m,k}}[n] \!\! = \! \!  c_{m,k}^{(o)}[n] \!+ \! {\bf{d}}_{m,k}^{(o)T}[n]({\bf{q}}_k[n] \!-\! {\bf{q}}_k^{(o)}[n]), \label{QoS:linear}
\end{align}
where $c_{m,k}^{(o)}[n] $ and ${\bf{d}}_{m,k}^{(o)}[n]$ are defined in \eqref{SCA:cmk} and \eqref{SCA:dmk} with $\nu ({\bf{W}}_{l,i}[n],{\bf{q}}_k^{(o)}[n])$, $\upsilon ({\bf{R}}_l[n],{\bf{q}}_k^{(o)}[n])$, ${g_{m,k}}[n]$, and ${h_{m,k}}[n]$ given in \eqref{SCA:nu}, \eqref{SCA:upsilon}, \eqref{SCA:gmk}, and \eqref{SCA:hmk}, respectively.
\begin{align}\label{SCA:cmk}
&c_{m,k}^{(o)}[n]\! \!=\! \!{\log _2}(\!\sum\limits_{l \in {\cal M}}\! \sum\limits_{i \in {\cal K}} \!{\eta ( {\bf{W}}_{l,i}[n] \!, \!{\bf{q}}_k^{(o)}[n] )} \! \!+ \! \! \! \sum\limits_{l \in {\cal M}} \! \! \mu ({\bf{R}}_l[n] \! , \!{\bf{q}}_k^{(o)}\![n])  \nonumber \\
&+ \! \frac{{{\sigma ^2}}}{\kappa}({ || {{{\bf{q}}_k^{(o)}}[n] \! \! - \! \! {{\bf{u}}_m}} || ^2} \! \! + \! \!  {H_k^2})) \!-\! {\log _2}(\sum\limits_{\scriptstyle(l,i) \ne \hfill\atop
\scriptstyle(m,k)\hfill} \! \! {\eta ({{\bf{W}}_{l,i}}[n],{\bf{q}}_k^{(o)}[n])}  \nonumber \\
&+ \!\sum\limits_{l \in {\cal M}} \!\mu ({\bf{R}}_l[n],{\bf{q}}_k^{(o)}[n])\! + \! \frac{{{\sigma ^2}}}{\kappa }({ || {{{\bf{q}}_k^{(o)}}[n] - {{\bf{u}}_m}}|| ^2} \!+ \!{H_k^2})),
\end{align}
%\vspace{-1ex}
\begin{align}\label{SCA:dmk}
&{\bf{d}}_{m,k}^{(o)}[n] = \frac{{ {{\log }_2}e}}{g_{m,k}[n]}(\sum\nolimits_{l \in {\cal M}} {\sum\nolimits_{i \in {\cal K}} {\nu ({\bf{W}}_{l,i}[n],{\bf{q}}_k^{(o)}[n])} } \nonumber \\
&+ \sum\nolimits_{l \in {\cal M}} {\upsilon ({{\bf{R}}_l}[n],{\bf{q}}_k^{(o)}[n])} )({{\bf{q}}_k^{(o)}}[n] - {{\bf{u}}_m}) \nonumber \\
& - \frac{{ {{\log }_2}e}}{h_{m,k}[n]}(\sum\nolimits_{(l,i) \ne (m,k)} {\nu ({{\bf{W}}_{l,i}}[n],{\bf{q}}_k^{(o)}[n])} \nonumber \\
&+ \sum\nolimits_{l \in {\cal M}} {{\upsilon ({{\bf{R}}_l}[n],{\bf{q}}_k^{(o)}[n])}} )({{\bf{q}}_k^{(o)}}[n] - {{\bf{u}}_m}),
\end{align}
%\vspace{-1ex}
\begin{align}\label{SCA:nu}
&\nu ({\bf{W}}_{l,i}[n],{\bf{q}}_k^{(o)}[n])= \sum\limits_{p = 1}^{{N_a}} \sum\limits_{q = p+1}^{{N_a}} 4\pi | {\left[ {\bf{W}}_{l,i}[n] \right]}_{p,q} | \nonumber \\
&\times \sin ( \theta _{p,q}^{{\bf{W}}_{l,i}[n]}+2\pi \frac{d}{\lambda }(q - p)\frac{H_k}{( {{ || {\bf{q}}_k^{(o)}[n] - {{\bf{u}}_m}|| }^2 + {H_k^2}} )^{\frac{1}{2}}}) \nonumber \\
&\times \frac{d H_k(q - p) }{\lambda ( {{ || {{{\bf{q}}_k^{(o)}}[n] - {{\bf{u}}_m}} || }^2 + {H_k^2}} )^{\frac{3}{2}}} ,
\end{align}
%\vspace{-1ex}
\begin{align}\label{SCA:upsilon}
&\upsilon ({\bf{R}}_l[n],{\bf{q}}_k^{(o)}[n])= \sum\limits_{p = 1}^{{N_a}} \sum\limits_{q = p+1}^{{N_a}} 4\pi | {{{\left[ {{\bf{R}}_l}[n] \right]}_{p,q}}} | \nonumber \\
&\times \sin ( \theta _{p,q}^{{{\bf{R}}_l}[n]}+ 2\pi \frac{d}{\lambda }(q - p)\frac{H_k}{( {{ || {{{\bf{q}}_k^{(o)}}[n] - {{\bf{u}}_m}}|| }^2 + {H_k^2}} )^{\frac{1}{2}}})\nonumber \\
& \times \frac{d H_k(q - p)}{\lambda ( {{ || {{{\bf{q}}_k^{(o)}}[n] - {{\bf{u}}_m}}|| }^2 + {H_k^2}} )^{\frac{3}{2}}},
\end{align}
%\vspace{-1ex}
\begin{align}\label{SCA:gmk}
&{g_{m,k}}[n]= \sum\nolimits_{l \in {\cal M}} \sum\nolimits_{i \in {\cal K}} \eta ( {\bf{W}}_{l,i}[n],{\bf{q}}_k^{(o)}[n] )\!+ \! \sum\nolimits_{l \in {\cal M}}\!\nonumber \\
& \mu ({\bf{R}}_l[n],{\bf{q}}_k^{(o)}[n]) \!+\! \frac{{{\sigma ^2}}}{\kappa }({ || {{{\bf{q}}_k^{(o)}}[n] - {{\bf{u}}_m}} || ^2} \!+\! {H_k^2}),
\end{align}
%\vspace{-1ex}
\begin{align}\label{SCA:hmk}
&{h_{m,k}}[n] = \sum\nolimits_{(l,i) \ne (m,k)} \eta ( {\bf{W}}_{l,i}[n],{\bf{q}}_k^{(o)}[n] ) \!+\! \sum\nolimits_{l \in {\cal M}} \nonumber \\
&\mu ({{\bf{R}}_l[n]},{\bf{q}}_k^{(o)}[n]) + \frac{{{\sigma ^2}}}{\kappa }({\left\| {{{\bf{q}}_k}[n] - {{\bf{u}}_m}} \right\|^2} + {H_k^2}).
\end{align}
%\vspace{-1ex}

To ensure the accuracy of our  approximation \eqref{traj:rate} and \eqref{QoS:linear}, we consider a series of trust region constraints in each iteration $o$:
\begin{align}
|| {{\bf{q}}_k^{( o)}[n] - {\bf{q}}_k^{( o - 1)}[n]}||  \le {\omega ^{( o)}}, \forall k \in {\cal K}, n \in {\cal N}, \label{trust:region}
\end{align}
where ${\omega ^{( o)}}$ denotes the radius of the trust region. Let ${{\tilde r}^{(o)}}[n] = \sum\nolimits_{l \in {\cal M}} {\sum\nolimits_{i \in {\cal K}} {{\alpha _{m,k}}[n]\tilde r_{m,k}^{(o)}[n]} } $ , by replacing $r[n]$ as ${{\tilde r}^{(o)}}[n]$, we obtain the approximated version of UAV trajectory optimization (P6) as (P7.$o$)

\begin{subequations}
\begin{align}
\left( {\rm{P7}}.o \right): &\mathop {\max }\limits_{\{ {\bf q}_k[n]\} }  ~~ \sum\nolimits_{n \in {\cal N}} {{\tilde r}^{(o)}}[n]   \label{P7:obj} \\
~{\rm{s.t.}}&~\eqref{p1:ini}, ~\eqref{p1:end}, ~\eqref{p1:speed}, ~\eqref{SCA:crash},~{\text{and}}~\eqref{trust:region}.
\end{align}
\end{subequations}

In summary, we solve problem (P6) by iteratively solving a series of convex problems (P7.$o$)'s. The convergence of approximation \eqref{QoS:linear} can be ensured by choosing a sufficiently small ${\omega ^{( o)}}$. In practice, in each iteration $o$, we reduce the trust region radius by ${\omega ^{(o)}} = \frac{1}{2}{\omega ^{(o)}}$ if the objective value is non-decreasing. The reduction will be terminated when ${\omega ^{(o)}}$ is lower than a threshold.

\begin{figure*}[htbp]
\centering
\subfigure[$\Gamma$=-20 dBW, N=10, H=80 m.]{\includegraphics[width=4.45cm]{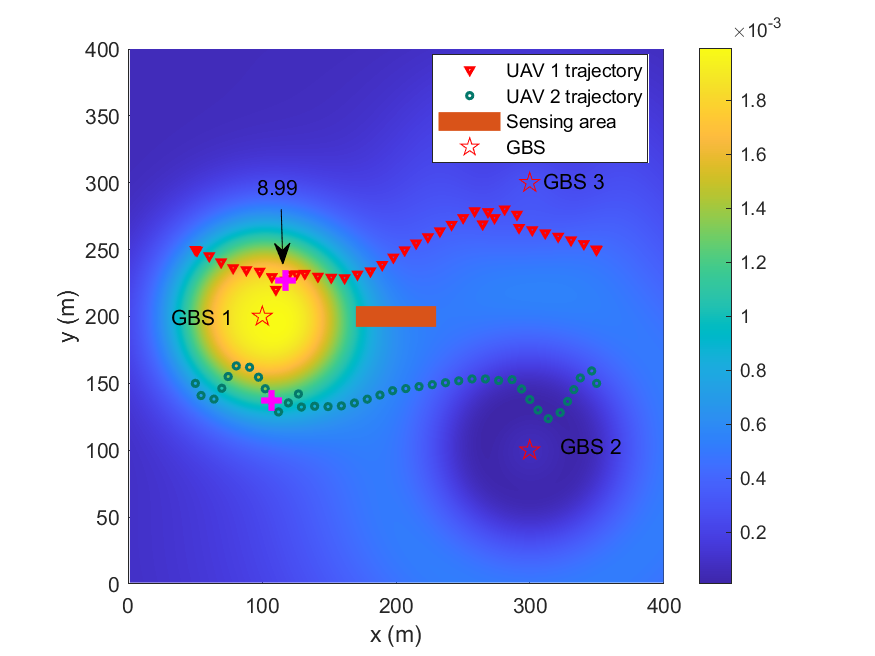}}
\subfigure[$\Gamma$=-20 dBW, N=10, H=100 m.]{\includegraphics[width=4.45cm]{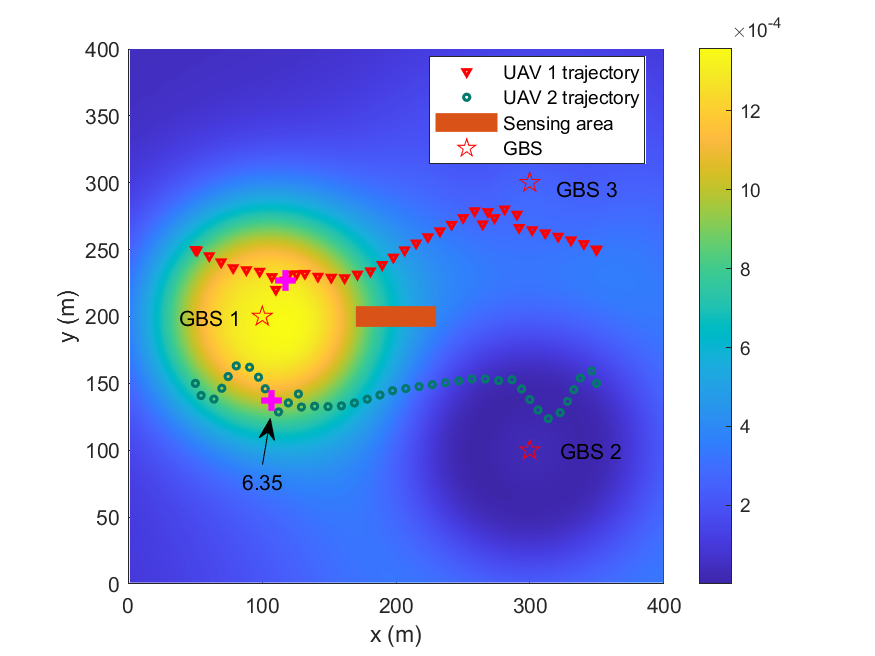}}
\subfigure[$\Gamma$=-10 dBW, N=10, H=80 m.]{\includegraphics[width=4.45cm]{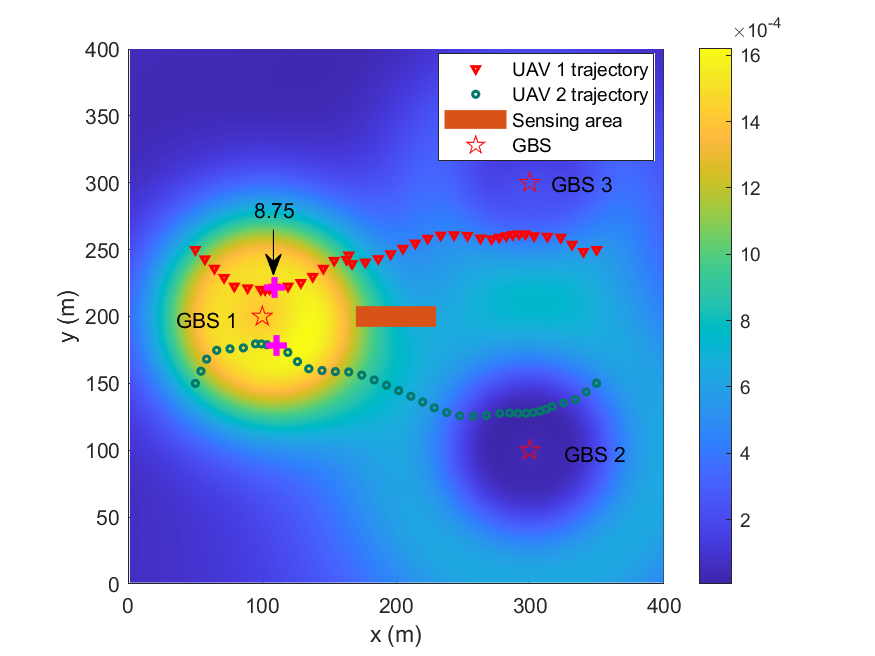}}
\subfigure[$\Gamma$=-10 dBW, N=10, H=100 m.]{\includegraphics[width=4.45cm]{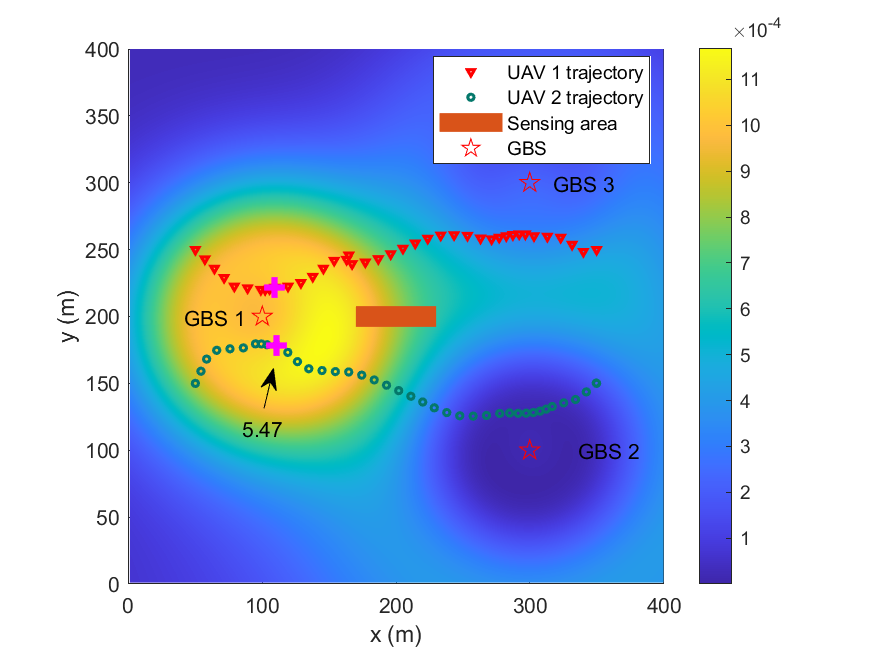}}
\caption{The achieved beampattern gain via proposed design in the corresponding altitudes of UAVs, the chosen time slot is $N=10$, the carmine '+' denote the UAVs location of this time slot, and the numbers associated with each UAV correspond to its communication rate (in bps/Hz).}
\label{fig:bpg}
\vspace{-4ex}
\end{figure*}
\section{Numerical Results}

In this section, we provide numerical results to illustrate the performance of our proposed designs for supporting networked ISAC for UAVs in LAE. In the simulation, we consider a $400~{\text{m}} \times 400~ {\text{m}}$ area with $M=3$ GBSs, $K=2$ UAVs, and $Q=20$ sample locations. The antenna spacing is set as $d = \frac{\lambda }{2}$, and the number of antenna at GBSs is $N_a=4$. We set the initial locations of UAVs as ${\bf{q}}_1^{\rm{I}} = [50{\text{m}},250{\text{m}}]$ and  ${\bf{q}}_2^{\rm{I}} = [50 {\text{m}},150{\text{m}}]$, the final locations as ${\bf{q}}_1^{\rm{F}} = [350{\text{m}},250{\text{m}}]$ and ${\bf{q}}_2^{\rm{F}} = [350{\text{m}},150{\text{m}}]$, the maximum speed of UAVs as $V_{\max}=10~ {\text{m/s}}$, and the flight altitude as ${H_1} = 80~{\text{m}}$ and ${H_2} = 100~{\text{m}}$, the total number of slots as $N=40$, and the maximum transmit power at GBSs as $P_{\max}=3 ~{\text{W}}$. Furthermore, the channel power gain at reference distance $d_0=1 {\text{m}}$ is $\kappa=-45~ {\text{dB}}$, and the noise power at UAVs is $\sigma^2=-100~{\text{dBW}}$. For comparison, we consider two benchmark designs as follows.

{\bf{Transmit beamforming only with straight flight}}: The authorized UAVs take off from the initial locations ${\bf{q}}_k^{\rm{I}}$ and landing at the final locations ${\bf{q}}_k^{\rm{F}}$ with straight flight by using the constant speed ${V_k} = {\textstyle{1 \over N}}\left\| {{\bf{q}}_k^{\rm{I}} - {\bf{q}}_k^{\rm{F}}} \right\|$. Accordingly, the transmit beamforming $\{{{\bf{w}}_{l,i}}[n]\}$ and $\{{{\bf{R}}_{l}}[n]\}$, and UAV association $\{\alpha_{l,i}[n]\}$ are optimized by solving (P2) with predetermined straight trajectories.

{\bf{Joint power allocation and trajectory design}}: The GBSs employ the isotropic transmission with ${\tilde{\bf{W}}_{l,i}}[n] = {\textstyle{{p_{l,i}^c[n]} \over {{N_a}}}}{\bf{I}}$ and ${\tilde{\bf{R}}_l}[n] = {\textstyle{{p_l^s[n]} \over {{N_a}}}}{\bf{I}}$, where ${p_{l,i}^c}[n]$ and ${p_l^s}[n]$ denote the transmit power values of information signals and dedicated sensing signals, respectively, which are optimization variables. Similarly, the power constraint at each GBS $l$ becomes $\sum\nolimits_{i \in {\cal K}} {p_{l,i}^c[n]}  + p_l^s[n] \le {P_{\max }}$. By substituting $\{{\tilde{\bf{W}}_{l,i}}[n]\}$ and $\{{\tilde{\bf{R}}_l}[n]\}$ into (P2) by omitting the rank constraints, we use the AO-based algorithm similarly to that in Section \ref{Sec:solution} to solve it.

Fig. \ref{fig:bpg} shows the obtained UAV trajectories and illumination powers of our proposed design at time slot $N=10$. It is observed that as $\Gamma$ increases from $-20$ dBm to $-10$ dBm, the UAVs fly more closely to the GBSs to obtain decreased path loss. This is because when the sensing constraints become critical, the GBSs need to steer the transmit beams towards the sensing area, and thus the UAV may need to fly close to that area to exploit the strong signal power. Comparing Fig. \ref{fig:bpg}(a) with \ref{fig:bpg}(b), it is observed that UAV $1$ achieves a higher communication rate than UAV $2$. Thanks to the properly designed transmit beamforming and the flexibly designed trajectory as well as UAV-GBS association, UAV $1$ is associated with GBS $1$ and hovers over slots $8-12$ for enhanced communication rate, while UAV $2$ is associated with GBS $2$ to avoid the inter-cell interference. We have similar observations in Figs. \ref{fig:bpg}(c) and \ref{fig:bpg}(d).

Fig. \ref{fig:tradeoff} shows the achieved average sum rate of authorized UAVs versus the illumination power for sensing. It is observed that as $\Gamma$ increases, the UAVs' average sum rate declines for all schemes. This is because GBSs need to spend more transmit power to satisfy the sensing requirements. It is also observed that the proposed design outperforms other benchmarks when $\Gamma$ varies from $-25$dBW to $-8$dBW, which validates the benefits of our trajectory design and transmit beamforming optimization. Furthermore, the benchmark of joint power allocation and trajectory design is observed to perform significantly worse than other schemes, which becomes infeasible when $\Gamma > 13$ dBW. This is because this scheme lacks the ability to reshape the transmit beamformers to cater to increased sensing requirements due to the limited DoFs.
%\vspace{-1ex}
\begin{figure}[ht]
\centering
    \includegraphics[width=6cm]{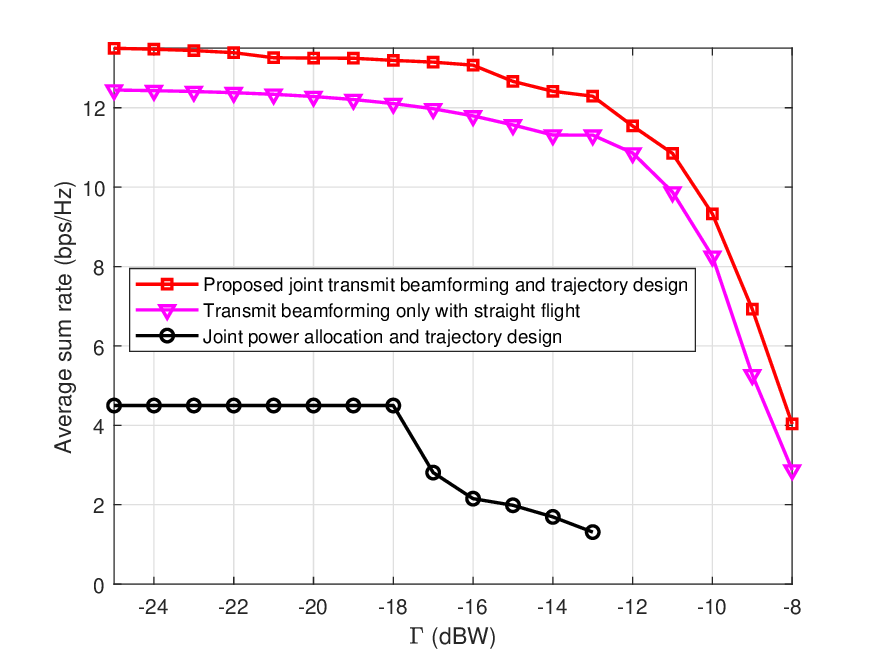}
\caption{The average sum rate of authorized UAVs versus the illumination power for sensing.}
\label{fig:tradeoff}
\end{figure}

%\vspace{-3ex}

\section{Conclusion}
This paper considered a networked ISAC system to support LAE, in which a set of networked GBSs cooperatively communicate with multiple authorized UAVs and concurrently perform cooperative sensing to sense the interested airspace for monitoring unauthorized objects. We proposed the joint cooperative transmit beamforming and trajectory design to maximize the average sum rate of authorized UAVs while satisfying the sensing requirements. We proposed an efficient algorithm to solve the highly non-convex problem via using the AO, SDR, and SCA techniques. Numerical results showed that our proposed joint transmit beamforming and trajectory design significantly outperforms other benchmark designs.

\bibliographystyle{IEEEtran}
\bibliography{IEEEabrv,myref}

\end{document}